\newtheorem{problem}{Problem}
\newtheorem{proposition}{Proposition}
\newtheorem{remark}{Remark}
\newtheorem{assumption}{Assumption}
\newtheorem{theorem}{Theorem}
\newtheorem{corollary}{Corollary}
\newcommand{\argmin}{\mathop{\text{arg\,min}}}
\newcommand{\argmax}{\mathop{\text{arg\,max}}}
\newcommand{\trace}{\mathrm{Tr}}
\newcommand{\supp}{\mathrm{supp}}
\DeclareRobustCommand\approximation{\raisebox{1.3pt}{\tikz{\draw[-,red,solid,line width = 0.9pt](0,0) -- (5mm,0);}}}
\DeclareRobustCommand\ole{\raisebox{1.3pt}{\tikz{\draw[-,blue,solid,line width = 0.9pt](0,0) -- (5mm,0);}}}
\DeclareRobustCommand\map{\raisebox{1.3pt}{\tikz{\draw[-,black!30!green,solid,line width = 0.9pt](0,0) -- (5mm,0);}}}
\definecolor{mycolor}{rgb}{0.122, 0.435, 0.698}
\newcommand{\mybox}[1]{%
  \setbox0=\hbox{#1}%
  \setlength{\@tempdima}{\dimexpr\wd0+13pt}%
  \begin{tcolorbox}[colframe=mycolor,boxrule=0.5pt,arc=4pt,
      left=6pt,right=6pt,top=6pt,bottom=6pt,boxsep=0pt,width=\@tempdima]
    #1
  \end{tcolorbox}
}
\renewcommand{\footnoterule}{
  \kern -2pt
  \hrule width 0.3\textwidth height .5pt
  \kern 2pt
}
\title{Security versus Privacy}
\author{
Farhad~Farokhi and Peyman Mohajerin Esfahani
\thanks{F. Farokhi is with the CSIRO's Data61 and the Department of Electrical and Electronic Engineering at the University of Melbourne, Australia. e-mail: ffarokhi@unimelb.edu.au, farhad.farokhi@data61.csiro.au}
\thanks{P. Mohajerin Esfahani is with the Delft Center for Systems and Control at the Delft University of Technology, the Netherlands. e-mail: P.MohajerinEsfahani@tudelft.nl}
\thanks{The work of F.~Farokhi was supported by the McKenzie Fellowship from the University of Melbourne, the VESKI Victoria Fellowship from the Victorian State Government, and a grant (MyIP: ID6874) from Defence Science and Technology Group (DSTG).}
\thanks{The work of P.~Mohajerin~Esfahani was supported by the Swiss National Science Foundation under the grant P2EZP2\_165264.}
}
\begin{document}
\maketitle

\begin{abstract} Linear queries can be submitted to a server containing private data. The server provides a response to the queries systematically corrupted using an additive noise to preserve the privacy of those whose data is stored on the server. The measure of privacy is inversely proportional to the trace of the Fisher information matrix. It is assumed that an adversary can inject a false bias to the responses. The measure of the security, capturing the ease of detecting the presence of the false data injection, is the sensitivity of the Kullback-Leiber divergence to the additive bias. An optimization problem for balancing privacy and security is proposed and subsequently solved. It is shown that the level of guaranteed privacy times the level of security equals a constant. Therefore, by increasing the level of privacy, the security guarantees can only be weakened and \textit{vice versa}. Similar results are developed under the differential privacy framework.
\end{abstract}

\section{Introduction}
Various frameworks, such as differential privacy~\cite{Dwork2011}, have been introduced to protect the privacy of individuals whose data is stored in online databases. These methods most often rely on the addition of noises with Laplace or Gaussian distributions to the outcome of queries on the databases containing the private information. More recently, differential privacy has found its way to control systems and signal processing~\cite{li2017privacy,le2014differentially,sandberg2015differentially,huang2014cost}. In addition to differential privacy, information theoretic methods (using mutual information or Fisher information as a measure of privacy) have been also developed within the control and estimation community for preserving the privacy of individuals~\cite{rajagopalan2011smart,farokhisandberg2016,kung2017compressive}. These methods also rely on the addition of noises which can be tailored for the specific problem at hand in order to protect the private data. 

Although privacy preserving, the additive noise might also make it harder for an outsider to be able to use the reported data for identifying malicious behavior. For instance, in the smart meter privacy examples in~\cite{rajagopalan2011smart,farokhisandberg2016}, a battery (which can be modeled as an additive noise with bounded support) is being used to mask the consumption patterns of the household. This ensures the privacy of the household. However, the battery operation will also makes it hard for the power authority to learn about the presence of malicious agents based on the provided smart meter data. This is because deviations of the smart meter readings from the power authority's expectations (formed on the basis of historical data or models of household consumption) can be attributed equally to the implemented privacy-preserving mechanism or a malicious entity. A systematic analysis of the trade-off between privacy and security is the topic of this paper.

Specifically, a problem setup is considered in which everyone can submit linear queries to an online server containing a vector of private data. The server, in return, provides a systematically corrupted response to the submitted queries. The corruption involves using an additive noise to preserve the privacy of the entries of the database, i.e., the aforementioned vector of private data. The server determines the statistics of the noise so that estimation error of the vector of private data is maximized under a constraint on the quality of the supplied response, captured by the variance of the additive noise. Noting that the estimation error of the private vector is a function of policy used for generating the estimate, the Cram\'{e}r-Rao bound~\cite[p.\,169]{cramerraotheorem}  is used to develop a universal measure of privacy which is inversely proportional to the trace of the Fisher information matrix. This measure of privacy is independent of the actions of the eavesdropper and is thus universal. It is assumed that an adversary can inject a bias to the server's response. The ability of users to be able to detect the presence of a bias (and thus raising a security alarm) is related to the Kullback-Leiber divergence of the output distribution with and without the additive bias. This provides a measure of security. The choice is motivated by the Chernoff-Stein Lemma (see, e.g.,~\cite{cover2012elements}) relating the probability of false negative (in the sense that a false bias injection attack escaping undetected) when using likelihood ratio hypothesis testing is a decreasing function of the Kullback-Leiber divergence of the output distribution with and without the additive bias. An optimization problem for balancing between privacy and security is proposed and solved. The solution in fact shows that the level of guaranteed privacy times the level of security is upper bounded by a constant. Therefore, by increasing the level of privacy, the security guarantees weaken and \textit{vice versa}. This observation can be generalized to any distribution in fact and is thus a fundamental property of the framework. Subsequently the differential privacy framework is studied for which the same limitation is also observed. 

Note that the use of Fisher information as a measure of privacy is not novel~\cite{anderson1977efficiency, farokhisandberg2016,farokhisandbergcdc2017}; however, a systematic method for balancing privacy and security is completely missing from the literature. This is the topic of the current paper.

Recently, in~\cite{giraldo2017security}, it was shown that differential privacy noise can prevent detection of integrity attack in dynamical systems. This is because the additive noise of differential privacy provides new avenues for an attacker to inject false information without raising suspicion. The results of this paper, although having similar interpretations, are different from~\cite{giraldo2017security}. Most importantly, using the Fisher information as a measure of privacy and the Kullback-Leiber divergence as a measure of security, we can develop a more fundamental understanding of the trade-off between security and privacy without restricting the framework to differential privacy. 

The rest of the paper is organized as follows. First, the problem formulation introducing the measures of privacy and security is presented in Section~\ref{sec:problem}. The results capturing the trade-off between privacy and security are then developed in Section~\ref{sec:results}. Finally, the paper is concluded in Section~\ref{sec:conclusions}.

\section{Problem Formulation} \label{sec:problem}
Consider the communication block diagram in Figure~\ref{fig:1}. A trustworthy server has access to a vector $x\in\mathcal{X}\subseteq\mathbb{R}^{n}$ whose entries must be kept private. Any agent, including those with an interest on infringing on the privacy of the individuals whose data is stored on the server, can submit a linear query of the form $Cx$ to the server  with observation matrix $C\in\mathbb{R}^{m\times n}$. 

\begin{assumption} $C$ has full row rank.
\end{assumption}

The server returns a response to the query of the form $z=Cx+w$, where $w\in\mathbb{R}^m$ is an additive privacy-preserving noise with probability density function $\gamma:\mathbb{R}^m\rightarrow\mathbb{R}_{\geq 0}$. 

\begin{assumption} \label{assum:2} $\gamma$ is twice continuously differentiable and $\supp(\gamma):=\{w\in\mathbb{R}^m\,|\,\gamma(w)>0\}$ may only differ from $\mathbb{R}^m$ over a Lebesgue measure zero set.
\end{assumption}

These are technical assumptions that allow us to efficiently capture the optimal trade-off between security and privacy. The first part of Assumption~\ref{assum:2} simplifies the search for the optimal privacy-preserving policy by allowing the use of tools available from the calculus of variations~\cite{kirk2004optimal}. The second part of Assumption~\ref{assum:2} ensures that the Fisher information matrix is well-defined and its trace is a convex function of $\gamma$~\cite{farokhisandbergcdc2017}. The set of all such probability density functions is denoted by $\Gamma$.

\subsection{Measure of Privacy}
In this paper, the Fisher information is utilized as a measure of privacy. In fact, the server aims at increasing 
\begin{align} \label{eqn:privacy_cost}
\mathcal{P}(\gamma):=1/\trace(W\mathcal{I}),
\end{align}
where the weighting matrix $W$ is a positive definite matrix and $\mathcal{I}$ is the Fisher information matrix defined as
\begin{align*}
\mathcal{I}:=\int \frac{\partial \log(\gamma(w))}{\partial w}\frac{\partial \log(\gamma(w))}{\partial w}^\top \gamma(w)\mathrm{d}w.
\end{align*}
Note that the Fisher information matrix is a function of the probability density function $\gamma$. This measure has been recently utilized within privacy literature; see, e.g,~\cite{farokhisandberg2016,farokhisandbergcdc2017}. The motivation behind this selection is given in what follows. 

The server wishes to keep the entries of the vector $x$ private. Therefore, it aims to select a probability density function $\gamma\in\Gamma$ to maximize $\mathbb{E}\{\|\Pi_x(x-\hat{x}(y))\|_2^2\}$, where $\Pi_x$ is a weighting matrix and $\hat{x}(y)$ is an estimator that an eavesdropper may use to estimate the value of the vector $x$ based on the received message $y$. 

Noting that the term $\mathbb{E}\{\|\Pi_x(x-\hat{x}(y))\|_2^2\}$ is a function of $\hat{x}(y)$, which makes the privacy measure depending on the eavesdropper (whose actions may not be known in advance), a lower bound of this term based on the Fisher information matrix is optimized. Using the Cram\'{e}r-Rao bound~\cite{533723}, under mild assumptions, it can be shown that
\begin{align*}
\mathbb{E}\{\|\Pi_x(x\hspace{-.03in}-\hspace{-.03in}\hat{x}(y))\|_2^2\}
=&\trace(\Pi_x^\top \Pi_x \mathbb{E}\{(x\hspace{-.03in}-\hspace{-.03in}\hat{x}(y))(x\hspace{-.03in}-\hspace{-.03in}\hat{x}(y))^\top\})\\
\geq &\trace(\Pi_x^\top \Pi_x ((g(x)-x)(g(x)-x)^\top \\
&\hspace{.7in}+G(x)\mathcal{I}_x^{\dag}G(x)^\top))\\
\geq &\trace(\Pi_x^\top \Pi_x (g(x)-x)(g(x)-x)^\top)\\
&+\trace(\Pi_x^\top \Pi_x\mathcal{I}_x^{\dag})\lambda_{\min}(G(x)^\top G(x))
\end{align*}
where $g(x)=\mathbb{E}\{\hat{x}(y)\}$, $G(x)$ is the Jacobian of $g(x)$, $\mathcal{I}_x=C^\top \mathcal{I}C$, and $X^\dag$ denotes the Moore-Penrose pseudo-inverse of any matrix $X$. Note that, if $G(x)$ is a full rank matrix (e.g., for all unbiased estimators), $\trace(\Pi_x^\top \Pi_x\mathcal{I}_x^{\dag})$ can be utilized as a measure of privacy that is independent of the behavior of the adversary. This is because by increasing $\trace(\Pi_x^\top \Pi_x\mathcal{I}_x^{\dag})$, the estimation error also increases. Noting that $\trace(\Pi_x^\top \Pi_x\mathcal{I}_x^{\dag})$  is not a concave function of $\gamma$, the measure of privacy can be replaced with $1/\trace((\Pi_x^\top \Pi_x)^{\dag}\mathcal{I}_x)$ because\footnote{Note that, for any non-zero semi-definite matrix $A$, it can be deduced that $\trace(A^\dag)\trace(A)\geq \trace(A^\dag A)\geq 1$ while implies that $\trace(A)\geq 1/\trace(A^\dag)$.} $\trace(\Pi_x^\top \Pi_x\mathcal{I}_x^{\dag})\geq 1/\trace((\Pi_x^\top \Pi_x)^{\dag}\mathcal{I}_x)$.
Interestingly, $1/\trace((\Pi_x^\top \Pi_x)^{\dag}\mathcal{I}_x)$ is a concave function of the probability density function $\gamma$ because $\trace((\Pi_x^\top \Pi_x)^{\dag}\mathcal{I}_x)$ is a convex function~\cite{farokhisandberg2016}. Thus maximizing $1/\trace((\Pi_x^\top \Pi_x)^{\dag}\mathcal{I}_x)$ is a more computationally-friendly task. Note that, for this motivational example, the weighting function in $\mathcal{P}$ is given by $W=C(\Pi_x^\top \Pi_x)^\dag C^\top$. 


\begin{remark}[Worst-case analysis] In the preceeding discussion, it is assumed that $G(x)$ is full rank, which although sensible (as estimators, such as least mean square, meet this condition), might not desirable. In~\cite{Farokhi_journal_2017}, it was shown that  $1/\trace(CC^\top \mathcal{I})$ can be proved to be a measure of privacy by studying worst-case privacy violations. In worst-case privacy attack, an eavesdropper has access to all the entries of the vector $x$ except one of them (the subject of the privacy infringement or eavesdropping attack) and it would like to infer the value of that entry based on the response to the submitted query. In that case, the weighting function in $\mathcal{P}$ is given by $W=C C^\top$. 
\end{remark} 

\subsection{Measure of Performance}
Noting that the error $\mathbb{E}\{\|\Pi_x(x-\hat{x}(y))\|_2^2\}$ can be made potentially unbounded (since there is no prior on $x$ and the server can add a Gaussian noise with increasing co-variance), the server also aims at maintaining a sensible level of performance by enforcing that
\begin{align}
\mathcal{Q}(\gamma):=\mathbb{E}\{\|y-Cx\|_2^2\}
\end{align}
remains below a certain level $\eta$, i.e., it is desired that the variance of the probability density function $\gamma$ is less than the provided upper bound by ensuring that
$$
\int w^\top w \gamma(w) \leq \eta.
$$
 
\begin{figure}[t]
\centering
\small
\begin{tikzpicture}
\node[draw,rectangle,minimum width=0.9cm,minimum height=0.7cm] (c) at (-0.5,0) {$C$};  
\node[] (x) at (-1.8,0) {$x$};
\draw[->] (x) -- (c);
\node[draw,circle] (s1) at (1.3,0) {};
\node[] at (1.3,0) {$+$};
\draw[->] (c) -- node[above] {$Cx$} (s1);
\node[] (w) at (1.3,0.8) {$w$};
\draw[->] (w) -- (s1);
\node[draw,circle] (s2) at (2.3,0) {};
\node[] at (2.3,0) {$+$};
\node[] (d) at (2.3,0.8) {$d$};
\draw[->] (d) -- (s2);
\draw[->] (s1) -- (s2);
\node[] (y) at (4.3,0) {$y=Cx+w+Fd$};
\draw[->] (s2) -- (y);
\end{tikzpicture}
\caption{\label{fig:1} Communication diagram.}
\end{figure}
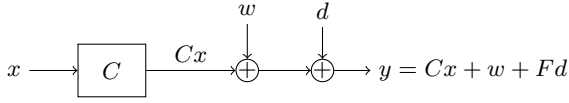


\subsection{Measure of Security}
The communication channel can be infiltrated by an adversary, which may inject the bias $d\in\mathbb{R}^p$. Thus, the final output is given by $y=Cx+w+Fd$. Therefore, the sever may also wish to make it possible for potential users to identify bias injection attacks performed by an adversary (to raise an alarm). This can be achieved by enforcing a constraint on an appropriately selected measure of security. 

In this paper, the sensitivity of Kullback-Leibler divergence between the probability density functions $\gamma(y-Cx)$ and $\gamma(y-Cx-Fd)$ is used as a measure of security. This shows that how easy it is for to distinguish between the probability density functions $\gamma(y-Cx)$ and $\gamma(y-Cx-Fd)$ for small offset term $d$ (which an adversary may use to avoid being detected). Therefore, the measure of security is given by 
\begin{align*}
\mathcal{S}(\gamma):=\min_{\xi\in \mathbb{R}^p}\lim_{d=\varrho \xi,
\varrho\rightarrow 0}\frac{\mathcal{KL}}{\|d\|_2^2},
\end{align*}
where 
\begin{align*}
\mathcal{KL}:=\int \gamma(y-Cx)\log\bigg(\frac{\gamma(y-Cx)}{\gamma(y-Cx-Fd)} \bigg)\mathrm{d}y.
\end{align*}
In this framework, it is desired to ensure that $\mathcal{S}(\gamma)\geq \alpha$, where $\alpha>0$ is an appropriately selected constant. 

This choice is motivated by that, at least for discrete random variables, it can be proved that the probability of false negative in the sense that a bias injection attack remains undetected when using likelihood ratio hypothesis testing is a decreasing function of $\mathcal{KL}$~\cite[Chernoff-Stein Lemma]{cover2012elements}. Although such a result may not hold in general, this observation can be used a motivation for the use of the Kullback-Leibler divergence as a measure of security. Thus, to keep the probability of false negatives small, a constraint of the form $\mathcal{KL}/\|d\|_2^2\geq \alpha$. Note that $\mathcal{KL}$ grows as a function of $d$ so a scaled version of the Kullback-Leibler divergence $\mathcal{KL}/\|d\|_2^2$ is considered. Considering that the adversary intends to not be detected (and the fact that the identification becomes easier as $\|d\|$ grows), it would be of interest to study small bias vectors $d$.

\subsection{Balancing Privacy and Security}
With the definitions of the measures of privacy and security in hand, it is now time to pose the problem mathematically.

\begin{problem} \label{prob:1} Find privacy preserving policy 
\begin{subequations}
\begin{align}
\gamma^*\in\argmax_{\gamma\in\Gamma} \quad & \mathcal{P}(\gamma),\\
 \mathrm{s.t.}\hspace{.1in}\quad & \mathcal{S}(\gamma)\geq \alpha, \\
&\mathcal{Q}(\gamma)\leq \eta.
\end{align}
\end{subequations}
\end{problem}


A popular framework for studying privacy is differential privacy; see, e.g.,~\cite{dwork2008differential}. The server's response is $\epsilon$-differentially private if 
\begin{align}
\mathbb{P}\{y\in\mathcal{Y}|x'\}\leq \exp(\epsilon)\mathbb{P}\{y\in\mathcal{Y}|x\}
\end{align}
for all $x,x'\in\mathcal{X}$ that are only different in maximum on entry and $\mathcal{Y}$ is a Lebesgue-measurable subset of $\mathbb{R}^m$. 

\begin{problem} \label{prob:2} Find $\epsilon$-differentially private $\gamma^*\in\Gamma$ such that $\mathcal{S}(\gamma)\geq \alpha$.
\end{problem}

Studying Problem~\ref{prob:2} allows us to see if we can observe the same results as in~\cite{giraldo2017security} within this setup. Furthermore, it can be investigated that if such results are in agreement with the optimal additive noise extracted from solving Problem~\ref{prob:1}.


\section{Main results} \label{sec:results}

The first result of this paper, formalized in the following theorem, states that the Gaussian additive noise with an appropriately selected co-variance matrix provides the best balance between privacy and security requirements according to Problem~\ref{prob:1}.

\begin{theorem} \label{tho:0}
The solution to Problem~\ref{prob:1} is given by
\begin{align*}
\gamma^*(w)=\frac{1}{\sqrt{\det(2\pi V_{ww})}}\exp\bigg(-\frac{1}{2}w^\top V_{ww}^{-1}w \bigg),
\end{align*}
where 
\begin{align*}
V_{ww}=\frac{\eta}{\trace(W^{1/2})}W^{1/2},
\end{align*}
if $\trace(W^{1/2})\lambda_{\min} (F^\top W^{-1/2}F)\geq 2\eta\alpha.$
\end{theorem}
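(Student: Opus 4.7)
The plan is: (1) compute $\mathcal{S}(\gamma)$ in closed form via a second-order Taylor expansion of $\mathcal{KL}$ in $d$; (2) replace the Fisher-information objective by a lower bound in terms of the second-moment matrix of $\gamma$, using a Cram\'{e}r--Rao--type inequality; (3) solve the resulting second-moment optimization in closed form via a Cauchy--Schwarz argument; and (4) verify that a Gaussian with the prescribed covariance attains every inequality with equality and is security-feasible under the stated hypothesis.

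For step~(1), I would expand $\log\gamma(w-Fd)$ to second order in $d$ under Assumption~\ref{assum:2}. Integrating against $\gamma(w)$, the constant term cancels, the linear term vanishes because $\int (\partial\gamma/\partial w)\,\mathrm{d}w=0$, and the quadratic term equals $\tfrac{1}{2}d^\top F^\top\mathcal{I}Fd$ by the definition of $\mathcal{I}$. Scaling by $\|d\|_2^2=\varrho^2\|\xi\|_2^2$, passing $\varrho\to 0$, and minimizing over $\xi$ gives the simple formula $\mathcal{S}(\gamma)=\tfrac{1}{2}\lambda_{\min}(F^\top\mathcal{I}F)$. For step~(2), let $M:=\int ww^\top\gamma(w)\,\mathrm{d}w$ denote the second-moment matrix, so that $\mathcal{Q}(\gamma)=\trace(M)\le\eta$. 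Applying the Cram\'{e}r--Rao inequality to the location family $\{\gamma(\cdot-\theta)\}$ with the identity estimator yields $\mathcal{I}\succeq M^{-1}$, with equality exactly for zero-mean Gaussians. Consequently $\trace(W\mathcal{I})\ge\trace(WM^{-1})$.

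For step~(3), I would apply the Cauchy--Schwarz inequality in the Hilbert--Schmidt inner product to $X=M^{-1/2}W^{1/2}$ and $Y=M^{1/2}$, noting that $X^\top Y = W^{1/2}$:
\begin{align*}
\bigl(\trace(W^{1/2})\bigr)^2 &= \bigl(\trace(X^\top Y)\bigr)^2 \\
&\le \trace(W M^{-1})\,\trace(M) \\
&\le \eta\,\trace(W M^{-1}),
\end{align*}
with equality exactly when $X$ and $Y$ are proportional, i.e.\ when $M\propto W^{1/2}$. Fixing $\trace(M)=\eta$ pins down the proportionality constant and produces the candidate $V_{ww}=(\eta/\trace(W^{1/2}))W^{1/2}$. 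Taking $\gamma^*$ to be the zero-mean Gaussian with covariance $V_{ww}$ makes both the Cram\'{e}r--Rao and the Cauchy--Schwarz bounds tight, so $\gamma^*$ attains the minimum $\trace(W\mathcal{I})=(\trace(W^{1/2}))^2/\eta$ over all $\gamma\in\Gamma$ with $\mathcal{Q}(\gamma)\le\eta$. Plugging into the formula from step~(1) yields $\mathcal{S}(\gamma^*)=(\trace(W^{1/2})/(2\eta))\,\lambda_{\min}(F^\top W^{-1/2}F)$; the stated hypothesis is equivalent to $\mathcal{S}(\gamma^*)\ge\alpha$, so $\gamma^*$ is feasible for Problem~\ref{prob:1} and therefore optimal.

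The principal obstacle, I expect, is making the Cram\'{e}r--Rao relaxation rigorous under only the regularity of Assumption~\ref{assum:2} --- in particular, justifying the interchange of differentiation and integration in the expansion of $\mathcal{KL}$, accommodating the Lebesgue-null gap in $\supp(\gamma)$, and identifying the Gaussian equality case for $\mathcal{I}\succeq M^{-1}$. A secondary subtlety is that the security constraint is \emph{dropped} in the relaxation; one must verify \emph{a posteriori} that the unconstrained-in-$\mathcal{S}$ optimum remains security-feasible, which is the sole role played by the side condition $\trace(W^{1/2})\lambda_{\min}(F^\top W^{-1/2}F)\ge 2\eta\alpha$.
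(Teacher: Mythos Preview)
Your proposal is correct and reaches the same conclusion as the paper, but by a genuinely different route. Both proofs share the outer scaffolding: drop the security constraint, solve the relaxed problem $\min_{\gamma}\trace(W\mathcal{I})$ subject to $\mathcal{Q}(\gamma)\le\eta$, and then check \emph{a posteriori} that the optimizer satisfies $\mathcal{S}(\gamma)\ge\alpha$ under the stated side condition; both also identify $\mathcal{S}(\gamma)=\tfrac{1}{2}\lambda_{\min}(F^\top\mathcal{I}F)$ via a second-order expansion of the Kullback--Leibler divergence. Where they diverge is in the inner minimization. The paper appeals to zero duality gap, adjoins the quality constraint with a Lagrange multiplier $\lambda$, and solves the resulting variational problem via a stationarity condition (a Schr\"odinger-type PDE in the substitution $\gamma=u^2$), after which the outer maximization in $\lambda$ is elementary. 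You instead relax $\trace(W\mathcal{I})$ to $\trace(WM^{-1})$ via the location-family Cram\'{e}r--Rao inequality $\mathcal{I}\succeq M^{-1}$ and then solve the resulting matrix problem by Cauchy--Schwarz in the Hilbert--Schmidt inner product. Your argument is more elementary---no duality theory or calculus of variations---and makes the role of the Gaussian as the equality case of Cram\'{e}r--Rao transparent; the paper's variational machinery is closer to a general-purpose recipe for optimization over densities and would adapt more readily to cost structures or constraints for which a tight Cram\'{e}r--Rao-type relaxation is unavailable.
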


\begin{proof} By eliminating the security constraint, Problem~\ref{prob:1} can be relaxed into
\begin{subequations}\label{eqn:2}
\begin{align} 
\gamma^*\in\argmin_{\gamma\in\Gamma}\quad&\trace(W\mathcal{I}),\\
\mathrm{s.t.}\hspace{.1in}\quad & \mathcal{Q}(\gamma)\leq \eta.
\end{align}
\end{subequations}
Note that the duality gap in~\eqref{eqn:2} is zero~\cite{jeyakumar1990zero}. Therefore, the constraint on the variance can be added to the cost function using a Lagrange multiplier, which transforms the problem into 
\begin{align} \label{eqn:1}
\max_{\lambda\geq 0}\min_{\gamma\in\Gamma}\trace(W\mathcal{I})+\lambda(\mathcal{Q}(\gamma)-\eta).
\end{align}
Following the same line of reasoning as in~\cite{farokhisandbergcdc2017}, the solution of the inner problem in~\eqref{eqn:1} is given by $\gamma^*(w)=u(w)^2$, where
\begin{align} \label{eqn:stationary}
\begin{cases}
\trace(W D^2 u(w))&\\
\hspace{.3in}+(\mu-(\lambda/4) w^\top w) u(w)=0, & w\in\mathcal{W},\\
u(w)=0, & w\in\partial\mathcal{W},\\
u(w)\neq 0, & w\in \mathrm{int}\mathcal{W},\\
\int_{w\in \mathcal{W}} u(w)^2\mathrm{d}w=1.
\end{cases}
\end{align}
Note that the cost function and the constraint set are convex, the stationarity condition in~\eqref{eqn:stationary} is sufficient for optimality. Further, if multiple density functions satisfy the conditions, they all exhibit the same cost. It can be shown that the following satisfies the stationarity condition:
\begin{align*}
u(w)=\frac{1}{\sqrt[4]{\det(2\pi V)}}\exp\bigg(-\frac{1}{4}w^\top V^{-1}w \bigg),
\end{align*}
where $V=W^{1/2}/\sqrt{\lambda}$. This shows that
\begin{align*}
\min_{\gamma\in\Gamma}&\;\trace(W\mathcal{I})+\lambda (\mathcal{Q}(\gamma)-\eta)=\trace(W V^{-1})+\lambda (\trace(V)-\eta).
\end{align*}
Therefore, the outer optimization problem in~\eqref{eqn:1} can be rewritten as
\begin{align*}
\max_{\lambda\geq 0}\, 2\trace(W^{1/2})\sqrt{\lambda}-\lambda\eta,
\end{align*}
and as a result $\lambda^*=\trace(W^{1/2})^2/\eta^2.$
%
Using~\cite{critchley1994preferred}, it can be shown that
\begin{align*}
\lim_{d=\varrho \xi,
\varrho\rightarrow 0}\frac{\mathcal{KL}}{\|d\|_2^2}=\frac{1}{2}\frac{\xi^\top \mathcal{I}_d \xi}{\xi^\top \xi}.
\end{align*}
where $\mathcal{I}_d:=F^\top\mathcal{I}F$.
Thus,
\begin{align*}
\mathcal{S}(\gamma)
&=\min_{\xi\in \mathbb{R}^p}
\lim_{d=\varrho \xi,\varrho\rightarrow 0} \frac{\mathcal{KL}}{\|d\|_2^2} =\frac{1}{2}\lambda_{\min}(\mathcal{I}_d)
\end{align*}
For $\gamma^*$, it can be seen that
\begin{align*}
\mathcal{I}_d=&\sqrt{\lambda^*}F^\top W^{-1/2}F=\trace(W^{1/2}) F^\top W^{-1/2}F/\eta.
\end{align*}
If $\mathcal{S}(\gamma)=(1/2)\lambda_{\min}(\mathcal{I}_d)\geq \alpha$, the solution of~\eqref{eqn:2} is also a solution of~\eqref{eqn:1}. This concludes the proof.
\end{proof}

Theorem~\ref{tho:0} presents the solution of Problem~\ref{prob:1} in the case where the constraint $\mathcal{Q}(\gamma)\leq \eta$ is active and $\mathcal{S}(\gamma)\geq \alpha$ is inactive. The following theorem extends this results to the case where the constraint $\mathcal{S}(\gamma)\geq \alpha$ is active and $\mathcal{Q}(\gamma)\leq \eta$ is inactive.


\begin{theorem} \label{tho:1}
Let 
\begin{subequations}\label{eqn:semidefinite}
\begin{align} 
\mathcal{V}:=\argmin_{X\succeq 0} & \quad \trace(W X),\\
\mathrm{s.t.} & \quad F^\top X F\succeq 2\alpha I.
\end{align}
\end{subequations}
The solution to Problem~\ref{prob:1} is given by
\begin{align}
\gamma(w)=\frac{1}{\sqrt{\det(2\pi V_{ww})}}\exp\bigg(-\frac{1}{2}w^\top V_{ww}^{-1}w \bigg)
\end{align}
if there exists $V_{ww}^{-1}\in\mathcal{V}$ such that $\trace(V_{ww})\leq \eta$.
\end{theorem}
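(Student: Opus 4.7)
The plan is to proceed by an SDP relaxation argument, dropping the (inactive) performance constraint and using the observation that the security constraint is a linear matrix inequality in the Fisher information. From the proof of Theorem~\ref{tho:0}, we already have
\begin{align*}
\mathcal{S}(\gamma)=\tfrac{1}{2}\lambda_{\min}(F^\top \mathcal{I} F),
\end{align*}
so that $\mathcal{S}(\gamma)\ge\alpha$ is equivalent to $F^\top\mathcal{I} F\succeq 2\alpha I$. First, I would relax Problem~\ref{prob:1} to
\begin{align*}
\min_{\gamma\in\Gamma}\ \trace(W\mathcal{I})\quad\text{s.t.}\quad F^\top\mathcal{I} F\succeq 2\alpha I,
\end{align*}
noting that any minimizer of this relaxation that also satisfies $\mathcal{Q}(\gamma)\le\eta$ is automatically a minimizer of the original problem.

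Next I would obtain a lower bound via the SDP~\eqref{eqn:semidefinite}. For every $\gamma\in\Gamma$, the Fisher information matrix $\mathcal{I}$ is positive semidefinite, and if $\gamma$ is feasible in the relaxation above, then $\mathcal{I}$ is itself a feasible point of~\eqref{eqn:semidefinite}. Consequently,
\begin{align*}
\trace(W\mathcal{I})\ge \min_{X\succeq 0,\,F^\top X F\succeq 2\alpha I}\trace(WX)=\trace(W V_{ww}^{-1}),
\end{align*}
where $V_{ww}^{-1}\in\mathcal{V}$ is the element singled out in the hypothesis. This lower bound holds for \emph{every} admissible density, regardless of its shape, because it uses only two facts about $\mathcal{I}$, namely that it is positive semidefinite and that it satisfies the security LMI.

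It then remains to exhibit a density that attains this bound and is feasible for the original problem. I would take $\gamma^{*}$ to be the zero-mean Gaussian with covariance $V_{ww}$; a direct computation (or the well-known closed form for the Gaussian score) yields $\mathcal{I}(\gamma^{*})=V_{ww}^{-1}$, so that $\trace(W\mathcal{I}(\gamma^{*}))=\trace(WV_{ww}^{-1})$ meets the SDP lower bound, and $F^\top V_{ww}^{-1}F\succeq 2\alpha I$ since $V_{ww}^{-1}\in\mathcal{V}$. The standing hypothesis $\trace(V_{ww})\le\eta$ is precisely $\mathcal{Q}(\gamma^{*})\le\eta$, so $\gamma^{*}$ is also feasible in Problem~\ref{prob:1}. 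Combining feasibility with the matching lower bound gives optimality.

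The only nontrivial step is recognizing that $\mathcal{S}(\gamma)\ge\alpha$ is an LMI in $\mathcal{I}$; once this is in hand, the argument reduces to a standard SDP relaxation that is tightened by restricting attention to Gaussian densities, whose Fisher information is the inverse covariance and can therefore realize any positive definite $X\succeq 0$ as $\mathcal{I}$. The condition $\trace(V_{ww})\le\eta$ in the hypothesis is precisely what ensures that the dropped performance constraint does not spoil feasibility of the Gaussian selected to match the SDP optimum.
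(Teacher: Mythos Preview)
Your proposal is correct and follows essentially the same route as the paper: recognize that $\mathcal{S}(\gamma)\ge\alpha$ is the LMI $F^\top\mathcal{I}F\succeq 2\alpha I$, relax to the SDP~\eqref{eqn:semidefinite} over all PSD $\mathcal{I}$, and then realize the optimal $\mathcal{I}$ via a Gaussian whose covariance is its inverse, with the hypothesis $\trace(V_{ww})\le\eta$ guaranteeing feasibility of the dropped performance constraint. Your write-up is in fact more explicit than the paper's in separating the lower-bound and achievability steps.
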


\begin{proof} Note that 
$\mathcal{I}_d=F^\top \mathcal{I}F$. 
Assume that each $\mathcal{I}\succeq 0$ is realizable, i.e., there exists $\gamma(w)$ that results in it. Thus, Problem~\ref{prob:1} can be transformed into the semi-definite program in~\eqref{eqn:semidefinite}. It remains to find a density function that has a Fisher information equal to the solution of~\eqref{eqn:semidefinite}. This is in fact possible using a multivariate normal distribution with covariance matrix $\mathcal{I}^{-1}$. This concludes the proof.
\end{proof}

For scalar queries, such as averaging, the solution to Problem~\ref{prob:1} can be greatly simplified. This is shown in the following corollary.

\begin{corollary} \label{cor:1} For scalar queries (i.e., $m=1$), the solution to Problem~\ref{prob:1} is given by
\begin{align}
\gamma(w)=
\dfrac{1}{\sqrt{2\pi V_{ww}}}\exp\bigg(-\dfrac{w^2}{2V_{ww}} \bigg),
\end{align}
where 
\begin{align*}
V_{ww}=
\begin{cases}
\eta, & \eta\leq\lambda_{\min} (F^\top F)/\alpha,\\
1/\alpha, & \mbox{otherwise}.
\end{cases}
\end{align*}
\end{corollary}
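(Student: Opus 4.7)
The plan is to specialize Theorems~\ref{tho:0} and~\ref{tho:1} to the scalar case $m=1$. Since both theorems identify a zero-mean Gaussian density as the optimal privacy-preserving noise in their respective regimes, the first step is to argue that it suffices to restrict the search in Problem~\ref{prob:1} to the one-parameter family of scalar Gaussians $\gamma(w)=(2\pi V_{ww})^{-1/2}\exp(-w^2/(2V_{ww}))$, indexed by the variance $V_{ww}>0$. This collapses Problem~\ref{prob:1} to a one-dimensional optimization.

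Next, I would translate each of the three ingredients of Problem~\ref{prob:1} into a statement about the scalar $V_{ww}$. For a scalar Gaussian, $\mathcal{I}=1/V_{ww}$, so the privacy objective $\mathcal{P}(\gamma)=1/(W\mathcal{I})=V_{ww}/W$ is strictly increasing in $V_{ww}$. The performance constraint becomes $\mathcal{Q}(\gamma)=V_{ww}\le\eta$. For the security constraint I would reuse the identity $\mathcal{S}(\gamma)=\tfrac{1}{2}\lambda_{\min}(\mathcal{I}_d)$ with $\mathcal{I}_d=F^\top\mathcal{I}F$ established inside the proof of Theorem~\ref{tho:0}; substituting the scalar $\mathcal{I}=1/V_{ww}$ gives $\mathcal{S}(\gamma)=\lambda_{\min}(F^\top F)/(2V_{ww})$, so $\mathcal{S}(\gamma)\ge\alpha$ becomes an upper bound on the variance.

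Then I would observe that both constraints take the form of upper bounds on the scalar $V_{ww}$, while the objective is monotone increasing in $V_{ww}$. Hence the optimum is obtained by making $V_{ww}$ as large as the tighter of the two bounds allows, and a direct case split on which bound is active recovers the two branches in the corollary's formula for $V_{ww}$.

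The only subtlety — more a verification than a genuine obstacle — is justifying the reduction to the one-parameter Gaussian family. This follows from Theorem~\ref{tho:0} itself: for any fixed variance budget, the Gaussian minimizes $\trace(W\mathcal{I})$ over $\gamma\in\Gamma$. Therefore, substituting the effective variance budget (namely the minimum of $\eta$ and the security-induced upper bound) into Theorem~\ref{tho:0} yields a Gaussian density that is feasible for Problem~\ref{prob:1} and whose privacy achieves the upper bound obtained by relaxing the active security constraint to the corresponding variance constraint. Global optimality among all $\gamma\in\Gamma$ — not just among Gaussians — then follows.
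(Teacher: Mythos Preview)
Your proposal is correct and follows the same route as the paper: specialize Theorems~\ref{tho:0} and~\ref{tho:1} to $m=1$, obtaining a case split according to which constraint is active, with the Gaussian of the appropriate variance optimal in each case. The paper's proof is simply terser---it invokes Theorem~\ref{tho:0} directly when $\mathcal{Q}$ binds and otherwise collapses the SDP of Theorem~\ref{tho:1} to the scalar problem $\min\{X:X\ge \alpha\}$---whereas you spell out the equivalent one-dimensional optimization over $V_{ww}$; one small caution is that in your final paragraph the replacement of the security constraint $\mathcal{I}\ge 2\alpha/\lambda_{\min}(F^\top F)$ by a variance constraint is a \emph{restriction}, not a relaxation, so global optimality in the second regime is cleanest obtained (as the paper does, and as your opening sentence already plans) by reading off the scalar case of Theorem~\ref{tho:1}.
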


\begin{proof} If $\eta\leq  \lambda_{\min} (F^\top F)/\alpha$, the results of Theorem~\ref{tho:0} can be used. Otherwise, the results of Theorem~\ref{tho:1} should be utilized in which case it can be seen that $\trace(W X)=WX$ (since both $X$ and $W$ are scalars) and $F^\top X F=(F^\top F) X $ (again because $X$ is a scalar). Hence, the optimization problem in~\eqref{eqn:semidefinite} can be transformed into $V_{ww}^{-1}\in\argmin_{X\geq \alpha} X$. Thus, $V_{ww}=1/\alpha$. This concludes the proof.
\end{proof}


\begin{corollary}\label{prop:1} For the optimal probability density function in Corollary~\ref{cor:1}, 
$\mathcal{S}(\gamma)\mathcal{P}(\gamma)=\lambda_{\min}(F^\top F)/(2W).$
\end{corollary}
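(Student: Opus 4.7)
The proof is essentially a one-line computation once the correct scalar simplifications are made, so the plan is to substitute the explicit Gaussian density of Corollary~\ref{cor:1} into the definitions of $\mathcal{P}$ and $\mathcal{S}$ and observe that the dependence on $V_{ww}$ cancels.

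First, I would compute the Fisher information of a zero-mean scalar Gaussian with variance $V_{ww}$. A direct differentiation gives $\partial \log \gamma(w)/\partial w = -w/V_{ww}$, whence
\begin{align*}
\mathcal{I} = \int \frac{w^2}{V_{ww}^2}\gamma(w)\,\mathrm{d}w = \frac{1}{V_{ww}}.
\end{align*}
Since $m=1$, the trace in the definition of $\mathcal{P}$ degenerates into scalar multiplication, yielding $\mathcal{P}(\gamma) = 1/(W\mathcal{I}) = V_{ww}/W$.

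Next I would invoke the identity $\mathcal{S}(\gamma) = (1/2)\lambda_{\min}(\mathcal{I}_d)$ established inside the proof of Theorem~\ref{tho:0}, together with $\mathcal{I}_d = F^\top \mathcal{I}F$. Because $\mathcal{I}$ is now a scalar, it factors out, giving $\mathcal{I}_d = (1/V_{ww})\,F^\top F$ and therefore $\mathcal{S}(\gamma) = \lambda_{\min}(F^\top F)/(2V_{ww})$. Multiplying the two expressions immediately produces
\begin{align*}
\mathcal{S}(\gamma)\mathcal{P}(\gamma) = \frac{\lambda_{\min}(F^\top F)}{2V_{ww}}\cdot\frac{V_{ww}}{W} = \frac{\lambda_{\min}(F^\top F)}{2W},
\end{align*}
as claimed.

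There is no genuine obstacle here; the only thing worth flagging is that Corollary~\ref{cor:1} produces two different optimal variances depending on which constraint is active ($V_{ww}=\eta$ versus $V_{ww}=1/\alpha$), but the product is manifestly independent of $V_{ww}$, so the identity holds uniformly in both regimes. This invariance is in fact the conceptual content of the corollary: it exhibits the privacy--security product as a hard constant determined solely by the query weighting $W$ and the attacker's injection gain $F$, so that any attempt to improve one measure is paid for one-for-one by the other.
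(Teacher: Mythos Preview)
Your proposal is correct and follows essentially the same route as the paper: both compute $\mathcal{P}(\gamma)=V_{ww}/W$ and $\mathcal{S}(\gamma)=\lambda_{\min}(F^\top F)/(2V_{ww})$ for the scalar Gaussian and observe that $V_{ww}$ cancels in the product. The only cosmetic difference is that the paper obtains $\mathcal{S}$ by writing down the Gaussian $\mathcal{KL}=\tfrac{1}{2}(Fd)^2V_{ww}^{-1}$ directly and then taking the limit, whereas you shortcut this by invoking the identity $\mathcal{S}(\gamma)=\tfrac{1}{2}\lambda_{\min}(F^\top\mathcal{I}F)$ already derived in the proof of Theorem~\ref{tho:0}; your explicit remark that the identity holds uniformly across both regimes of Corollary~\ref{cor:1} is a worthwhile addition.
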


\begin{proof}
For the optimal policy in Corollary~\ref{cor:1}, it can be seen that $\mathcal{KL}=\frac{1}{2} (F d)^2 V_{ww}^{-1},$ and, as a result, $\mathcal{S}(\gamma)=\frac{1}{2}\lambda_{\min}(F^\top F)/V_{ww}$. On the other hand, $\mathcal{P}(\gamma)=V_{ww}/W$.
\end{proof}

Proposition~\ref{prop:1} shows that by increasing $\mathcal{P}(\gamma)$ to achieve a higher privacy guarantee, $\mathcal{S}(\gamma)$ decreases, which makes the system more vulnerable to bias injection attacks. In fact, in lay terms, it can be expressed that
\begin{center}
\mybox{$\mbox{``privacy}\times \mbox{security} = \mbox{constant''}.$}
\vspace{-.3in}\hfill ($\star$)
\end{center}

In what follows, it is shown that Corollary~\ref{prop:1} and its interpretation in ($\star$) hold for any probability density function $\gamma(w)$ if $m=1$ (and not necessarily the solution of Problem~\ref{prob:1})

\begin{proposition}[Trade-off between Privacy and Security] \label{prop:2} For $m=1$, 
$\mathcal{S}(\gamma)\mathcal{P}(\gamma)=\lambda_{\min}(F^\top F)/(2W)$ for any $\gamma\in\Gamma$.
\end{proposition}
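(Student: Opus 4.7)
The plan is to observe that in the scalar case $m=1$ the Fisher information collapses to a scalar that appears as a common factor in both $\mathcal{S}(\gamma)$ and $1/\mathcal{P}(\gamma)$, and hence cancels in the product. No variational argument or optimization is needed; this is a direct substitution once the two measures have been written explicitly.

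First, I would specialize the definitions. For $m=1$, $\mathcal{I}$ is the scalar
$$
\mathcal{I} = \int \left(\frac{d\log\gamma(w)}{dw}\right)^{2}\gamma(w)\,dw,
$$
$W$ is a positive scalar, so $\mathcal{P}(\gamma) = 1/(W\mathcal{I})$. For the security measure, I would invoke the identity already established in the proof of Theorem~\ref{tho:0} (via \cite{critchley1994preferred}), namely
$$
\lim_{d=\varrho\xi,\ \varrho\to 0}\frac{\mathcal{KL}}{\|d\|_{2}^{2}} = \frac{1}{2}\frac{\xi^{\top}\mathcal{I}_{d}\xi}{\xi^{\top}\xi},\qquad \mathcal{I}_{d}=F^{\top}\mathcal{I}F,
$$
so $\mathcal{S}(\gamma) = \tfrac{1}{2}\lambda_{\min}(\mathcal{I}_{d})$. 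This formula did not require Gaussianity of $\gamma$ and therefore applies to every $\gamma\in\Gamma$.

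Next, I would exploit scalarity of $\mathcal{I}$: since $\mathcal{I}\in\mathbb{R}_{\geq 0}$, one has $\mathcal{I}_{d}=\mathcal{I}\,F^{\top}F$ as a $p\times p$ matrix, and therefore $\lambda_{\min}(\mathcal{I}_{d}) = \mathcal{I}\,\lambda_{\min}(F^{\top}F)$. Multiplying,
$$
\mathcal{S}(\gamma)\,\mathcal{P}(\gamma) = \frac{1}{2}\,\mathcal{I}\,\lambda_{\min}(F^{\top}F)\cdot\frac{1}{W\mathcal{I}} = \frac{\lambda_{\min}(F^{\top}F)}{2W},
$$
which is the claimed identity.

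There is essentially no obstacle; the only subtle point is to justify that the limit formula for $\mathcal{KL}/\|d\|_{2}^{2}$ holds for arbitrary $\gamma\in\Gamma$ and not merely for the Gaussian optimizer. This follows from Assumption~\ref{assum:2} (twice continuous differentiability and full support up to a null set), which is precisely what makes the second-order expansion of Kullback--Leibler divergence around $d=0$ valid and reduces it to the Fisher information form used above.
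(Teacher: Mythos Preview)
Your proposal is correct and follows essentially the same approach as the paper's own proof: both invoke the general identity $\mathcal{S}(\gamma)=\tfrac{1}{2}\lambda_{\min}(F^\top\mathcal{I}F)$, form the product $\mathcal{S}(\gamma)\mathcal{P}(\gamma)=\lambda_{\min}(F^\top\mathcal{I}F)/(2\trace(W\mathcal{I}))$, and then use the scalarity of $\mathcal{I}$ when $m=1$ to cancel it. Your added remark that Assumption~\ref{assum:2} is what justifies the second-order expansion of $\mathcal{KL}$ for arbitrary $\gamma\in\Gamma$ is a useful clarification that the paper leaves implicit.
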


\begin{proof} For any density function, it can be seen that
\begin{align*}
\mathcal{S}(\gamma)=\lim_{d=\varrho \xi,
\varrho\rightarrow 0}\frac{\mathcal{KL}}{\|d\|_2^2}=\frac{1}{2}\frac{\xi^\top F^\top\mathcal{I}F \xi}{\xi^\top \xi}=\frac{1}{2}\lambda_{\min}(F^\top\mathcal{I}F).
\end{align*}
Thus,$\mathcal{P}(\gamma)\mathcal{S}(\gamma)=\lambda_{\min}(F^\top\mathcal{I}F)/({2\trace(W\mathcal{I})}).$ For $m=1$, it can be shown that $\mathcal{P}(\gamma)\mathcal{S}(\gamma)=\lambda_{\min}(F^\top F)/(2W)$ because $\mathcal{I}$ is scalar.
\end{proof}

\begin{figure}
\centering
\begin{tikzpicture}
\node[] at (0,0) {\includegraphics[width=1.0\columnwidth]{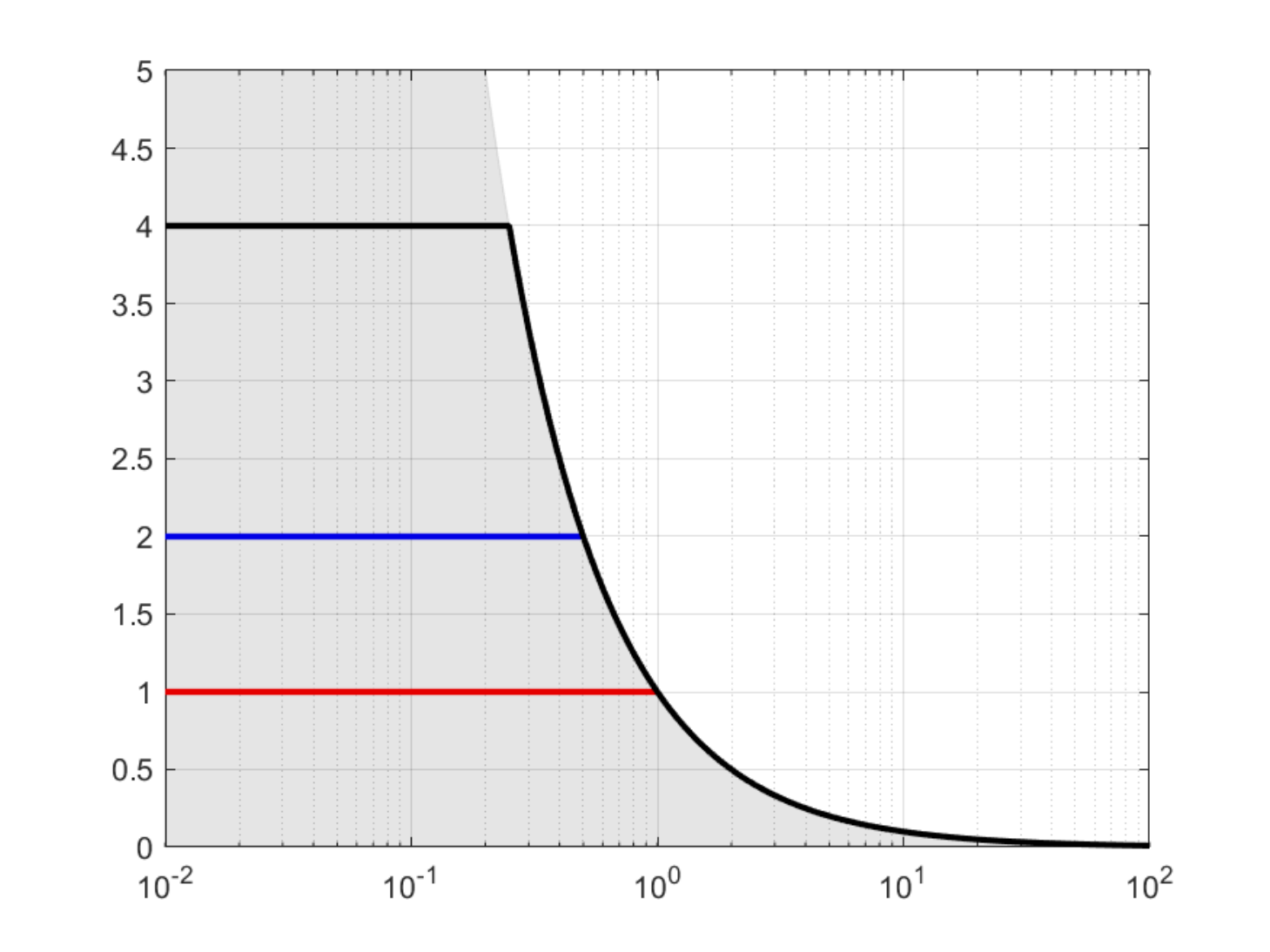}};
\node[rotate=90] at (-3.9,0) {privacy $\mathcal{P}(\gamma)$};
\node[] at (0,-3.2) {security guarantee $\alpha$};
\end{tikzpicture}
\caption{\label{fig:1} The trade-off between measure of privacy $\mathcal{P}(\gamma)$ for the optimal policy in Corollary~\ref{cor:1} versus the lower bound on the measure of security $\alpha$ for various response quality guarantees $\eta=1$ (solid red\,\approximation), $\eta=2$ (solid red\,\ole), and $\eta=4$ (solid green\,\map). The plateau on the achievable privacy guarantee for small $\alpha$ is caused by the constraint on the quality of measurement $\mathcal{Q}(\gamma)$. The gray area denotes the cases for which $\mathcal{P}(\gamma)\alpha\leq 1$. }
\end{figure} 

Figure~\ref{fig:1} illustrates the trade-off between measure of privacy $\mathcal{P}(\gamma)$ for the optimal policy in Corollary~\ref{cor:1} versus the lower bound on the measure of security $\alpha$ for various quality of response guarantees $\eta$. In this numerical example, $m=1$, $F=1$, and $W=1$. The plateau on the achievable privacy guarantee for small values of $\alpha$ is caused by the constraint on the quality of measurement $\mathcal{Q}(\gamma)$. The gray area denotes the cases for which $\mathcal{P}(\gamma)\alpha\leq 1$. All these cases are achievable for various values of $\eta$. Note that this not in contrast with the results of  Propositions~\ref{prop:1} and~\ref{prop:2} as they explore $\mathcal{P}(\gamma)\mathcal{S}(\gamma)$ (which is not necessarily equal to $\mathcal{P}(\gamma)\alpha$ as the constraint $\mathcal{S}(\gamma)\geq \alpha$ is not always active).

Now, we are ready to explore the solution of Problem~\ref{prob:2} regarding the balance between privacy and security in the differential privacy framework.

\begin{theorem} \label{tho:2} For scalar problems, i.e., $m=1$, and $\epsilon\geq \Delta\sqrt{2\alpha}$, the solution to Problem~\ref{prob:2} is given by
\begin{align}
\gamma(w)=\frac{1}{2\Delta/\epsilon}\exp\bigg(-\frac{|w|}{\Delta/\epsilon} \bigg),
\end{align}
where $\Delta:=\sup_{x,x'\in\mathcal{X}:\|x-x'\|_0\leq 1}|C(x-x')|$.
\end{theorem}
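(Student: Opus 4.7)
The plan is to verify directly the two properties defining Problem~\ref{prob:2}, since Problem~\ref{prob:2} is a feasibility question rather than an optimisation and the candidate mechanism is prescribed: the Laplace density with scale $b:=\Delta/\epsilon$ must (i) be $\epsilon$-differentially private and (ii) satisfy $\mathcal{S}(\gamma)\ge\alpha$ under the hypothesis $\epsilon\ge\Delta\sqrt{2\alpha}$.

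For (i), I would invoke the standard Laplace-mechanism calculation. For any neighbouring $x,x'\in\mathcal{X}$, the definition of $\Delta$ gives $|C(x-x')|\le\Delta$, and the triangle inequality combined with the closed form of $\gamma$ yields
\[
\frac{\gamma(y-Cx)}{\gamma(y-Cx')}=\exp\!\left(\frac{|y-Cx'|-|y-Cx|}{\Delta/\epsilon}\right)\le\exp\!\left(\frac{|C(x-x')|\,\epsilon}{\Delta}\right)\le e^{\epsilon},
\]
pointwise in $y$. Integrating this bound over any Lebesgue-measurable $\mathcal{Y}\subseteq\mathbb{R}$ delivers the required probability inequality.

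For (ii), I would compute the Fisher information of the Laplace density directly: since $\partial_w\log\gamma(w)=-\mathrm{sgn}(w)\,\epsilon/\Delta$ almost everywhere, the score has constant magnitude $\epsilon/\Delta$, so $\mathcal{I}=\epsilon^2/\Delta^2$. Substituting into the scalar-$m$ specialisation $\mathcal{S}(\gamma)=\tfrac{1}{2}\lambda_{\min}(F^\top F)\,\mathcal{I}$ that was obtained inside the proof of Proposition~\ref{prop:2} (where scalarity of $\mathcal{I}$ allowed factoring it out of $\lambda_{\min}(F^\top\mathcal{I}F)$), the hypothesis $\epsilon^2\ge2\alpha\Delta^2$ translates to $\mathcal{S}(\gamma)\ge\alpha$ (under the implicit normalisation $\lambda_{\min}(F^\top F)\ge1$ inherent in the theorem statement).

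The main obstacle is that the Laplace density is not twice continuously differentiable at the origin and so strictly lies outside the class $\Gamma$ singled out by Assumption~\ref{assum:2}, while the identity $\mathcal{S}(\gamma)=\tfrac{1}{2}\lambda_{\min}(F^\top\mathcal{I}F)$ I rely on was derived under that smoothness. I would discharge this by a mollification argument: approximate $\gamma$ by smooth $\gamma_n\in\Gamma$ obtained through convolution with a vanishing-bandwidth bump, verify that $\mathcal{I}_n\to 1/b^2$ and that the scaled KL divergence converges to its Laplace counterpart, and pass to the limit in the definition of $\mathcal{S}$. A cleaner alternative is a direct Taylor expansion of $\int\gamma(w)\log[\gamma(w)/\gamma(w-Fd)]\,\mathrm{d}w$ about $d=0$, exploiting that the kink at $w=0$ is Lebesgue-null and contributes no leading-order boundary term.
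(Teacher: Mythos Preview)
Your proposal is correct and follows essentially the same route as the paper: the likelihood-ratio bound via the triangle inequality for the $\epsilon$-differential-privacy part, and the computation $\mathcal{I}=\epsilon^2/\Delta^2$ leading to $\mathcal{S}(\gamma)\ge\alpha$ iff $\epsilon\ge\Delta\sqrt{2\alpha}$ for the security part. The paper's own proof is terser---it simply asserts $\mathcal{I}_d=\epsilon^2/\Delta^2$ without addressing the regularity issue you raise or the role of $F$---so your additional care about Assumption~\ref{assum:2} and the implicit normalisation of $F$ only makes the argument more complete, not different.
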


\begin{proof} Note that
\begin{align}
\frac{p(y|x')}{p(y|x)}&=\exp\bigg(\frac{|y-Cx|-|y-Cx'|}{\Delta/\epsilon}\bigg)
\nonumber\\
&\leq \exp\bigg(\frac{|C(x'-x)|}{\Delta/\epsilon}\bigg)\nonumber\\
&\leq \exp(\epsilon),\label{eqn:diffprivacy}
\end{align}
where the first inequality follows from that
$|y-Cx|=|y-Cx+Cx'-Cx'|\leq |y-Cx'|+|C(x'-x)|.$
Integrating both sides of~\eqref{eqn:diffprivacy} concludes the proof. Furthermore, $\gamma$ meets $\mathcal{I}_d=\epsilon^2/\Delta^2$. Thus, $\mathcal{I}_d\geq 2\alpha$ if and only if $\epsilon\geq \Delta\sqrt{2\alpha}$.
\end{proof}

For the $\epsilon$-differentiallay private distribution in Theorem~\ref{tho:2}, the following can be proved:
\begin{align*}
\mathcal{KL}=&\int \frac{1}{2\Delta/\epsilon}\exp\bigg(-\frac{|y-Cx|}{\Delta/\epsilon} \bigg)\\
&\hspace{.2in}\times\bigg(\frac{|y-Cx-Fd|}{\Delta/\epsilon} -\frac{|y-Cx|}{\Delta/\epsilon} \bigg)\mathrm{d}y\\
=&\int \frac{1}{2\Delta/\epsilon}\exp\bigg(-\frac{|\bar{y}|}{\Delta/\epsilon} \bigg)\bigg(\frac{|\bar{y}-Fd|}{\Delta/\epsilon} -\frac{|\bar{y}|}{\Delta/\epsilon} \bigg)\mathrm{d}\bar{y}\\
=&\exp(-|Fd|\epsilon/\Delta)-1+|Fd|\epsilon/\Delta.
\end{align*}
Therefore
\begin{align*}
\mathcal{S}(\alpha)
=\min_{\xi}\lim_{d=\varrho \xi,
\varrho\rightarrow 0}\frac{\mathcal{KL}}{\|d\|_2^2}
=\frac{\lambda_{\min}(F^\top F)\epsilon^2}{2\Delta^2}.
\end{align*}
This implies that by increasing the privacy guarantee (which is inversely proportional to $\epsilon$), the security level decreases and \textit{vice versa}. This is a similar observation to that of~($\star$). 

Figure~\ref{fig:3} illustrates the trade-off between measure of privacy $1/\epsilon$ versus the measure of security $\mathcal{S}(\gamma)$ for  the differentially-private policy in Theorem~\ref{tho:2}. Here, $m=1$, $\Delta=1$, and $F=1$. Recalling that $\mathcal{S}(\gamma)$ is motivated by small biases $d$, we also explore $\mathcal{KL}$ for differentially-private policies. This relationship is shown in Figure~\ref{fig:2}. Clearly, the same trend regarding the inverse relationship of the privacy and security can still be observed.

\begin{figure}
\centering
\begin{tikzpicture}
\node[] at (0,0) {\includegraphics[width=1.0\columnwidth]{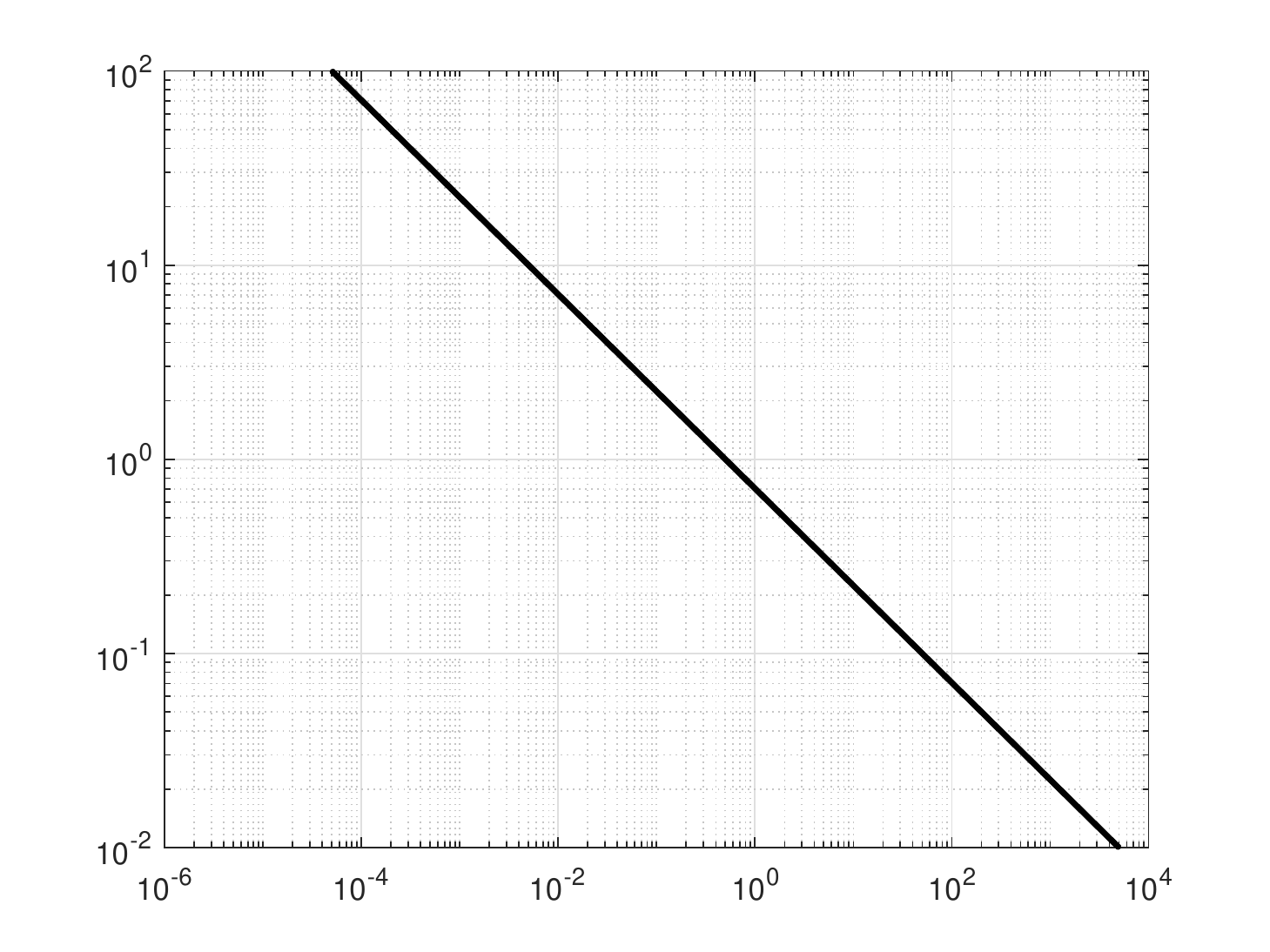}};
\node[rotate=90] at (-3.9,0) {privacy $1/\epsilon$};
\node[] at (0,-3.2) {security $\mathcal{S}(\alpha)$};
\end{tikzpicture}
\caption{\label{fig:3} The trade-off between measure of privacy $1/\epsilon$ and the measure of security $\mathcal{S}(\gamma)$ for  the differentially-private policy in Theorem~\ref{tho:2}. }
\vspace{-.2in}
\end{figure}

\begin{figure}
\centering
\begin{tikzpicture}
\node[] at (0,0) {\includegraphics[width=1.0\columnwidth]{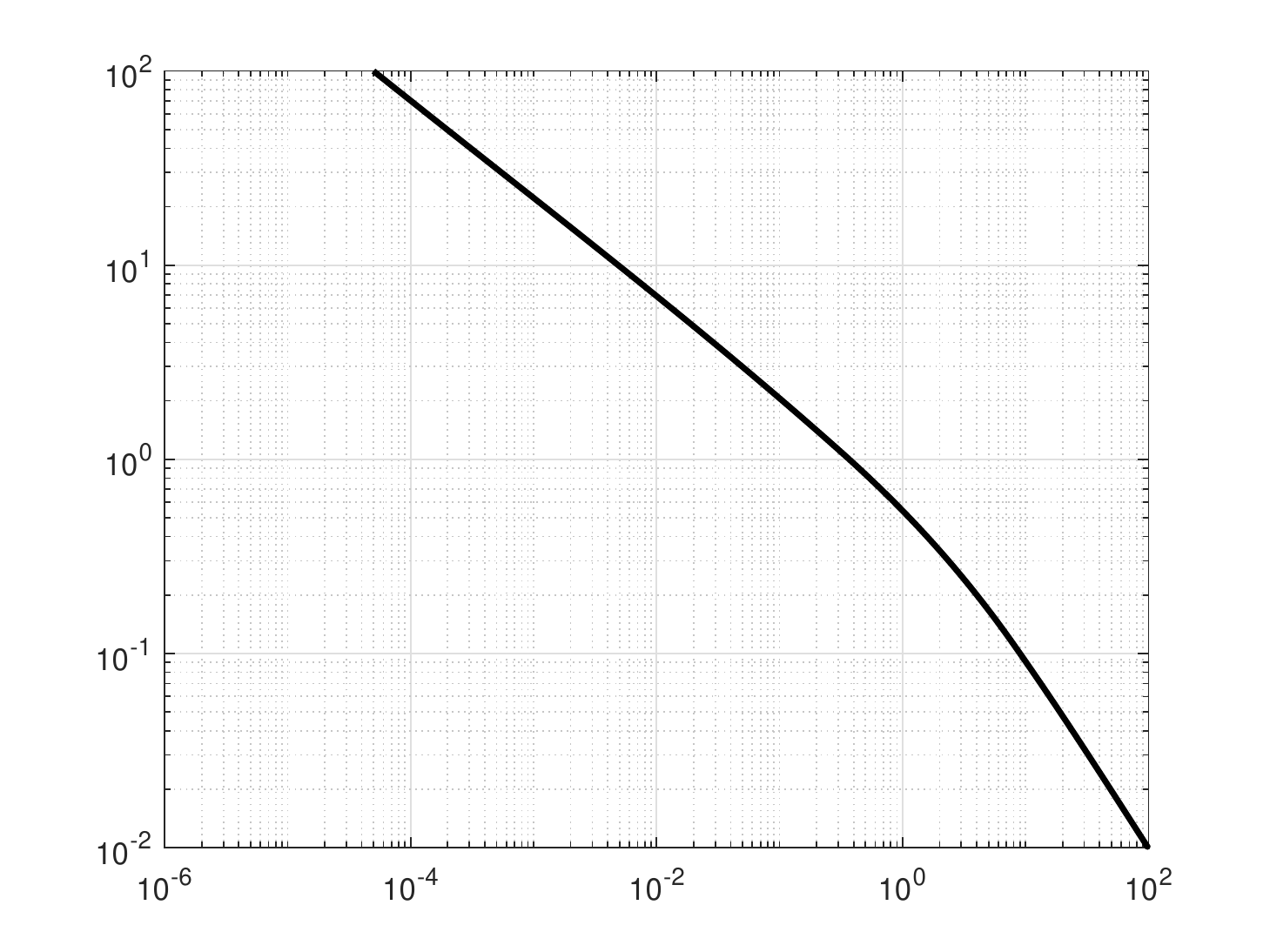}};
\node[rotate=90] at (-3.9,0) {privacy $1/\epsilon$};
\node[] at (0,-3.2) {security $\mathcal{KL}$};
\end{tikzpicture}
\caption{\label{fig:2} The trade-off between measure of privacy $1/\epsilon$ and $\mathcal{KL}$ for the differentially-private policy in Theorem~\ref{tho:2}. }
\end{figure} 
 
\section{Conclusions and Future Work} \label{sec:conclusions}
A framework was developed in which linear query can be submitted to a server containing private data. The server provides a response to the query corrupted using an additive noise to preserve the privacy of those whose data is on the server. It is shown that the level of guaranteed privacy times the level of security is always upper bounded by a constant and, as a result, higher privacy guarantees dictates weakened security guarantees. Future work can focus on dynamic problems.

\bibliographystyle{ieeetr}
\bibliography{ref1}

\end{document}